\def\BibTeX{{\rm B\kern-.05em{\sc i\kern-.025em b}\kern-.08em
    T\kern-.1667em\lower.7ex\hbox{E}\kern-.125emX}}
\newcommand{\comment}[1]{}
\newcommand{\name}{$\mathcal{R} \mathcal{S} \mathcal{V}{\mathcal{P}}$\xspace}
\newtheorem{theorem}{Theorem}
\newtheorem{corollary}{Corollary}
\newcommand\circledcheck[1][green]{%
{\scalebox{1.5}{\textcolor{#1}{\textcircled{\normalcolor$\checkmark$}}}}%
}
\newcommand\circledcross[1][red]{%
{\scalebox{1.5}{\textcolor{#1}{\textcircled{\normalcolor$\times$}}}}%
}
\begin{document}

\title{\name: Beyond Weisfeiler Lehman Graph Isomorphism Test}
\author{}

\author{\IEEEauthorblockN{Sourav Dutta} \\
\IEEEauthorblockA{\textit{Huawei Ireland Research} \\
Dublin, Ireland \\
Email: sourav.dutta2@huawei.com} \\
\and
\IEEEauthorblockN{Arnab Bhattacharya} \\
\IEEEauthorblockA{\textit{Indian Institute of Technology} \\
Kanpur, India \\
Email: arnabb@cse.iitk.ac.in}
}

\maketitle

\begin{abstract}
	\emph{Graph isomorphism}, a classical algorithmic problem, determines whether two input graphs are structurally identical or not. Interestingly, it is one of the few problems that is not yet known to 
	belong to either the {\em P} or {\em NP-complete} complexity classes. As such, intelligent search-space pruning based strategies were proposed for developing isomorphism testing solvers like 
	{\tt nauty} and {\tt bliss}, which are still, unfortunately, exponential in the worst-case scenario. Thus, the polynomial-time {\em Weisfeiler-Lehman} (WL) isomorphism testing heuristic, based on {\em colour 
	refinement}, has been widely adopted in the literature. More recently graph isomorphism has been used for graph neural network applications as well. However, WL fails for multiple classes of non-isomorphic graph instances such as 
	strongly regular graphs, block structures, and switched edges, among others.

	In this paper, we propose a novel \emph{polynomial-time} graph isomorphism testing heuristic, \name, and depict its enhanced discriminative power compared to the Weisfeiler-Lehman approach for several 
	challenging classes of graphs. Bounded by a run-time complexity of $\mathcal{O}(m^2 + mn^2 + n^3)$ (where $n$ and $m$ are the number of vertices and edges respectively), we show that \name can identify 
	non-isomorphism in several ``hard'' graph instance classes including {\em Miyazaki, Paulus, cubic 
	hypohamiltonian, strongly regular, Latin series} and {\em Steiner triple system graphs}, where the $3$-WL test fails. Algorithmically, for every vertex of an input graph, \name computes the {\em vertex signature} 
	based on {\em prime encoding} of ``reachability distance'' to other vertices in the graph (using breadth-first search). These vertex signatures coupled with the average pair-wise 
	distance among the parents of a vertex is then used to determine the isomorphism between the graphs. Similar to the WL test, our proposed algorithm is prone to only one-sided errors, where isomorphic graphs will never be determined to be non-isomorphic, although the reverse can happen. 
	Empirically, however, we demonstrate lower false positive rates (compared to WL) on multiple challenging graph instance classes, thereby improving the scope of non-isomorphism 
	detection to a broader class of graphs.
\end{abstract}

\begin{IEEEkeywords}
Graph Isomorphism, Vertex Signature, Breadth-First Search, Hop Reachability, Prime Encoding, Polynomial Time Heuristic
\end{IEEEkeywords}

\newpage

\section{Introduction and Background}
\label{sec:intro}

{\bf Definition.} The {\em graph isomorphism} (GI) problem involves determining whether two graphs are structurally equivalent or identical~\cite{karp}. Mathematically, an isomorphism between two graphs is a bijection between the 
vertex sets of the graphs such that the \emph{adjacency} property is preserved. Isomorphism detection has several applications related to molecular graph matching in chemoinformatics~\cite{iso}, combinatorial optimizations in 
microprocessor circuits~\cite{micro}, computer vision~\cite{cv}, and more recently in graph neural networks~\cite{GNN_iso}. As such, this classical problem has been studied extensively since the early days of computing, both in terms of theoretical 
as well as practical point of view. GI was introduced by Karp in the seminal paper on the NP-completeness of combinatorial problems~\cite{karp}, and is still regarded as one of the major open problems in computer science. 
Intriguingly, although GI is {\em polynomial-time verifiable}, it is yet unknown whether it is solvable in polynomial time (i.e., complexity class $P$) or is $NP$-complete, or potentially in {\em intermediate 
complexity}~\cite{iso2,art}. However, NP-completeness is considered unlikely in the community, since it would imply the collapse of the polynomial-time hierarchy complexity classes above $NP$~\cite{np}.
Interestingly, the related problems of subgraph isomorphism has been shown to be $NP$-complete~\cite{cook} and approximate graph isomorphism to minimize the number of bijection mismatches is known to be $NP$-hard~\cite{Apprx_iso}.
Generally, in the above definition of isomorphism, the graphs are considered to be undirected non-labeled and non-weighted; however, the notion can be extended to the other variants as well. When the isomorphism of a 
graph is onto itself, it is referred to as {\em automorphism}.

{\bf Related Study.} The first important steps in terms of theoretically solving GI can be attributed to the {\em Whitney graph isomorphism theorem} stating that two connected graphs are isomorphic if and only if their 
line graphs are identical, with the sole exception being the complete graph on three vertices ($K_3$) and the complete bipartite graph $K_{1,3}$~\cite{whit}. Subsequently, an $\mathcal{O}(n \log n)$ algorithm was proposed for 
isomorphism testing for planar graphs~\cite{plane}. Polynomial time algorithms are known for certain other special classes of graphs as well, such as those with forbidden minors~\cite{pono, gro} and forbidden topological 
minors~\cite{gro2}, including graphs with bounded degrees, bounded genus and bounded tree-widths to name a few~\cite{master}. The ``na\"ive refinement'' procedure was shown to solve GI for a number of graph classes in linear 
time~\cite{gr1, gr2}. Group-theoretic methods along with the divide-and-conquer framework were proposed, leading to GI being solvable in $e^{\mathcal{O}(\sqrt{n \log n})}$ time~\cite{bound1a, bound1b} -- which remained the known 
bound for decades. 
Recently, Babai proposed a {\em quasi-polynomial time} GI algorithm~\cite{bound2a, bound2b} with a running time of $e^{(\log n)^{\mathcal{O}(1)}}$ (the best known bound so far), which was followed by other quasi-polynomial parameterized 
algorithms~\cite{survey_iso}. However, such algorithms are not practical for large graphs. A detailed overview of classical and recent techniques to tackle the GI problem can be found in~\cite{survey_iso}.

{\bf Intelligent Solvers.} To tackle GI in practice, several isomorphism testing softwares were developed including {\tt nauty}~\cite{nauty1, nauty2}, {\tt saucy}~\cite{saucy1, saucy2}, {\tt Traces}~\cite{traces1, traces2} 
and {\tt Bliss}~\cite{bliss1, bliss2}.
Such approaches mainly involve a combination of combinatorial strategies like {\em colour refinement} and {\em canonical labeling} along with intelligent use of data structures for efficiently pruning the search space. 
In fact, these approaches involve the innovative use of automorphism for pruning, and was shown to work quite well for large graphs as well as for a wide variety of graph classes~\cite{pract_iso, bench_iso}. Other efforts 
have studied probabilistic linear programming~\cite{LP_Iso} and eigenvalue splitting~\cite{eigen}. Experiments on benchmark graph datasets have shown that on inputs without any particular combinatorial structure, these solvers scale 
almost linearly~\cite{bench_iso}. Although it is non-trivial to create challenging instances for these solvers, these toolboxes work on the {\em individualized-refinement} paradigm~\cite{ind} and have a worst-case complexity bounded by 
an exhaustive search which is exponential~\cite{exp_nauty, exp_solver}. For example, earlier versions of {\em nauty} was shown to run in exponential time for the Miyazaki family of graphs (see Fig.~\ref{fig:examples}(c))~\cite{nauty_miya}.

\begin{figure}[t]
\centering
\resizebox{\columnwidth}{!}{
	\begin{tabular}{ccc}
		\includegraphics[width=0.3\columnwidth]{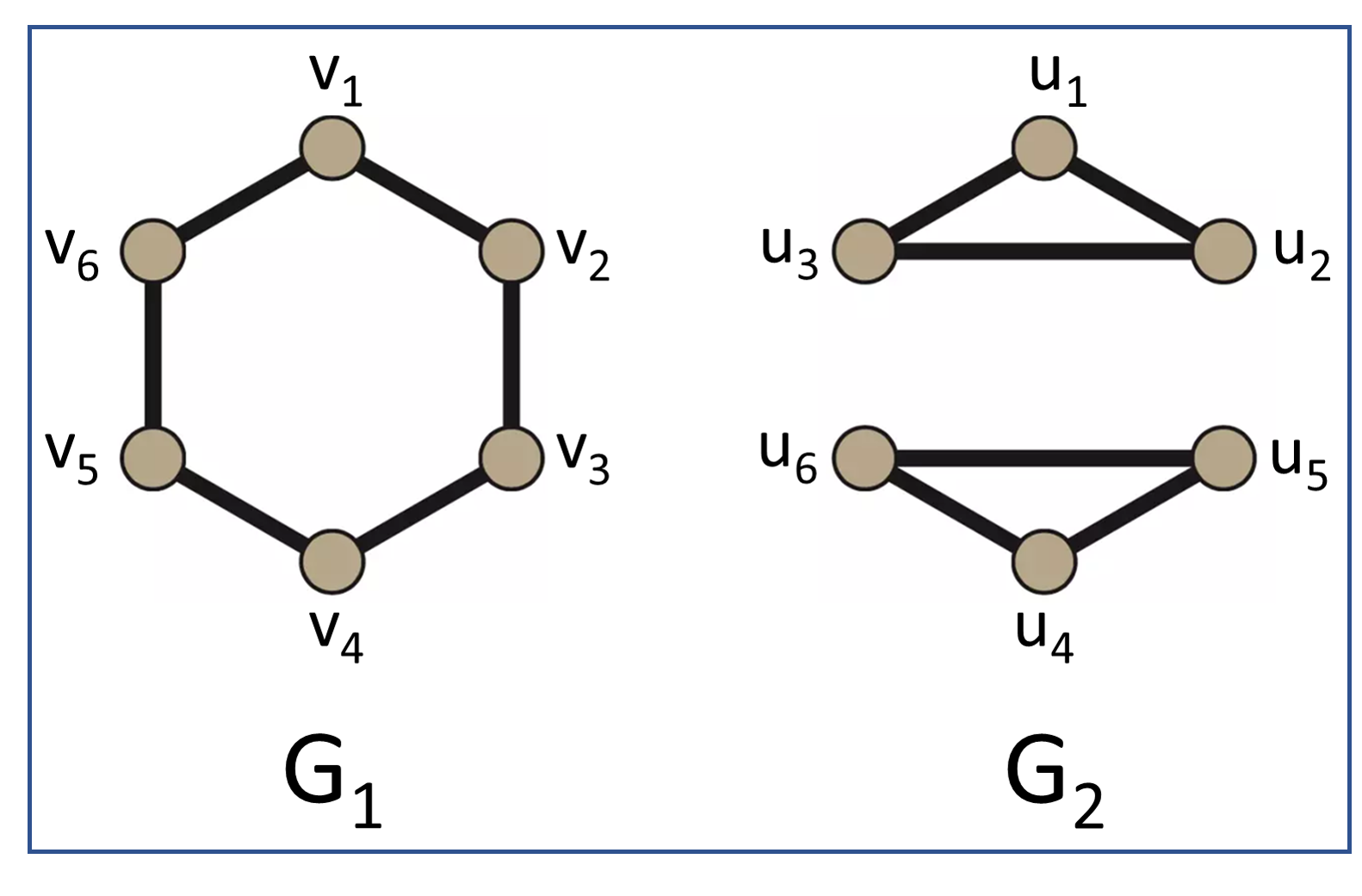} &
		\includegraphics[width=0.33\columnwidth]{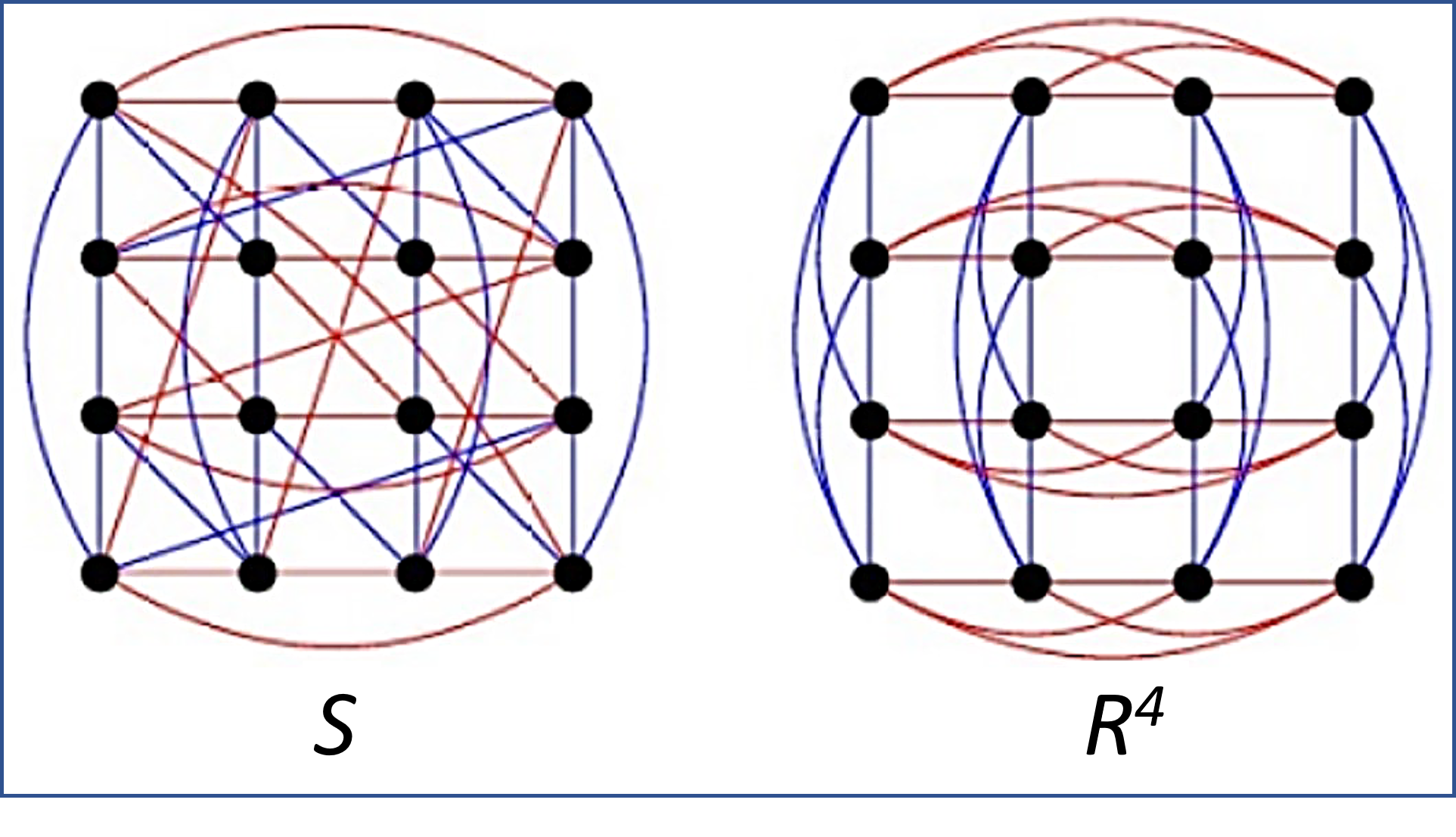} & 
		\includegraphics[width=0.58\columnwidth]{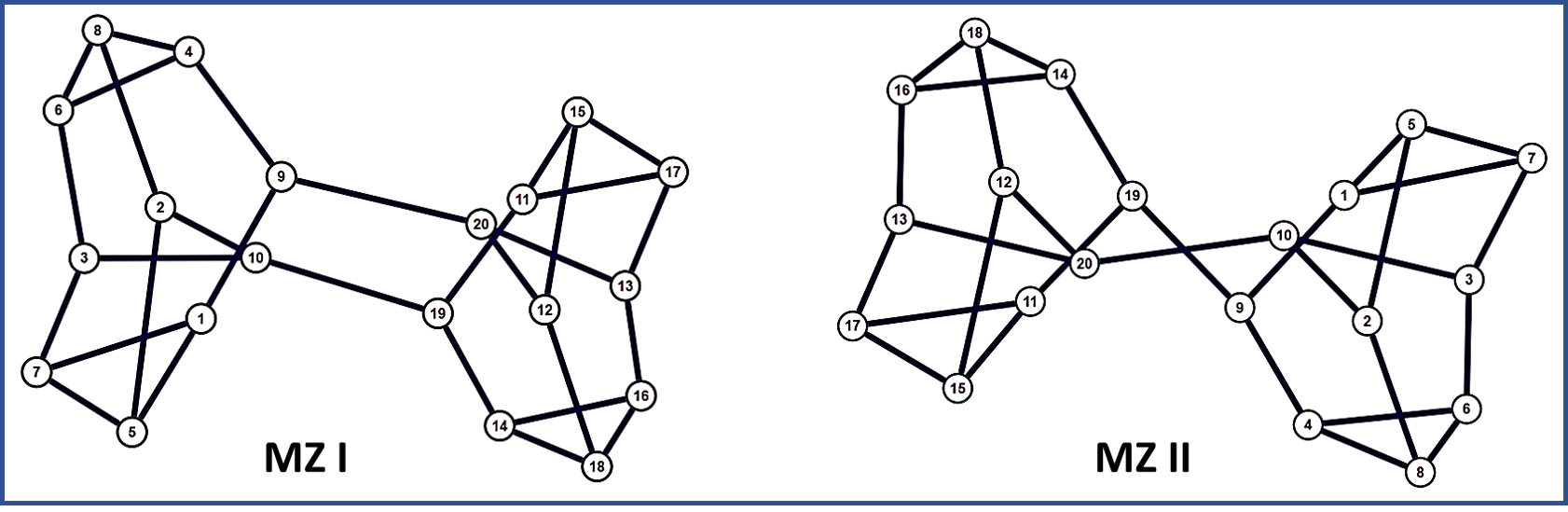} \\
		{\bf (a)} & {\bf (b)} & {\bf (c)}
	\end{tabular}}
	\caption{Non-isomorphic graph examples for which Weisfeiler-Lehman (WL) test fails: {\bf (a)} {\em Disconnected property} between 6-vertex graphs $G_1$ and $G_2$, {\bf (b)} {\em Strongly regular property} between Shrikhande 
	graph ($S$) and 4$\times$4 Rook's graph ($R^4$), and {\bf (c)} {\em Switched edge property} of Miyazaki Graphs (MZ I and MZ II). WL incorrectly reports the graphs pairs $G_1$:$G_2$, $S$:$R^4$, and MZ I:MZ II as isomorphic 
	based on colour refinement strategy (while the proposed \name heuristic correctly distinguishes them). Additionally, $3$-WL, $2$-FWL, and GD-WL fails to distinguish between $S$:$R^4$, while \name is successful.}
	\label{fig:examples}
\end{figure}

{\bf Polynomial Weisfeiler-Lehman Heuristic.} To this end, the well-known polynomial time {\em Weisfeiler-Lehman} (WL) algorithm is a classical isomorphism test based on color refinement~\cite{wl1, wl2}. 
The algorithm converges to provide 
``states'' or colours to the vertices based on their iterated degree sequence, thereby creating a representation or ``certificate'' of the graphs which are then compared for isomorphism. However, the WL test is ``incomplete'' 
and suffers 
from \emph{one-sided false positive} errors. In other words, if two graphs have different representations they are definitely non-isomorphic, but graphs with identical representations may not be isomorphic. In fact, the WL 
heuristic fails for certain seemingly easy disconnected graph instances as well as for other challenging graph families like strongly regular graphs, as shown in Fig.~\ref{fig:examples}(a) and Fig.~\ref{fig:examples}(b) respectively. To 
combat the above problem 
to a certain extent, the $k$-dimensional WL ($k$-WL)~\cite{kwl2, kwl3, kwl4} and $k$-folklore-WL ($k$-FWL)~\cite{kfwl} approaches have been proposed as natural generalizations to the original $1$-dimensional WL heuristic. The running time 
complexity of the $k$-dimensional WL test is polynomially bounded by $\mathcal{O}(n^{k+1} \log n)$~\cite{kwl1} for a constant $k$. Thus, WL has been widely adopted not only for fast isomorphism testing but also for modern applications such as 
graph kernels~\cite{ker}, graph neural networks (GNN)~\cite{GNN_iso}, graph property recognition~\cite{gp}, and link prediction~\cite{link}. In fact, the quasi-polynomial algorithm of Babai employs the $k$-WL heuristic for 
$k=\mathcal{\mathcal{O}}(\log n)$~\cite{master}. Other properties like invariance and dimensionality of the WL algorithm have also been studied~\cite{linear_WL, inv}. Recently, graph biconnectivity based {\em Generalized 
Distance Weisfeiler-Lehman} (GD-WL)~\cite{gdwl} using distance measures have been proposed for GNN.

{\bf Motivation.} Despite its success, the WL heuristics suffers from false positive errors (i.e., graphs labeled as isomorphic might not be so) on several graph instances arising from different combinatorial 
structures~\cite{thes}. For example, strongly regular graph structures, block designs and coherent configurations provide instances where the WL test fails. Instances include the non-isomorphic $4 \times 4$ Rook's graph and Shrikhande 
graph as shown in Fig.~\ref{fig:examples}(b). Although $3$-WL test fails for the above Rook's and Shrikhande graphs, they are distinguishable by the $4$-WL test. The seminal work of Cai, 
F\"urer, and Immerman (CFI)~\cite{cfi} proved that that for every $k \geq 1$ there exists non-isomorphic 3-regular graphs of order $\mathcal{O}(k)$ that are not distinguishable by the $k$-WL algorithm. 
Further challenging examples can be considered to be incidence graphs of projective planes, Hadamard matrices, or Latin squares~\cite{traces2}. The CFI and switched edges Miyazaki construct graphs~\cite{miya} 
(see Fig.~\ref{fig:examples}(c)) provide particularly hard instances for isomorphism testing. In this work, we investigate an alternative to the WL testing heuristic that provides an enhanced discriminative power in 
terms of identifying non-isomorphic graph structure.

{\bf Contribution.} This paper proposes a novel algorithm, {\em Reachability-based Signature of Vertices using Primes} (\name), for the {\em graph isomorphism} problem. We show that \name provides an efficient 
$\mathcal{O}(m^2 + mn^2 + n^3)$ polynomial-time heuristic for isomorphism testing, and can accurately distinguish non-isomorphic graphs even for several difficult instances. Notably, we empirically show that \name can 
identify non-isomorphism in diverse challenging classes of graphs like {\em Miyazaki, Paulus, cubic hypohamiltonian, strongly regular, Latin series} and {\em Steiner triple system graphs}, wherein the $2$-WL test fails. This improves 
the scope of isomorphism detection to a broader class of graphs for better applicability, not only in terms of isomorphism testing but also for related applications like computational biology and social network analysis~\cite{ker}. 
Although our approach is also ``incomplete'' (similar to WL) and fails for extremely hard graphs like CFI, projective planar, and Hadamard matrix graphs; we hope that this work will spur interest in exploring alternatives 
to the WL heuristics. Theoretically defining the limits of our algorithm (i.e., classes of graphs where it fails) remains a future direction of study.

\section{Graph Isomorphism Testing with \name}
\label{sec:algo}

Assume $\mathcal{G}_1 = (\mathcal{V}_1, \mathcal{E}_1)$ and $\mathcal{G}_2 = (\mathcal{V}_2, \mathcal{E}_2)$ to be two input graphs (with corresponding vertex and edge sets) for isomorphism testing.
Without loss of generality, we consider both graphs to comprise $n$ vertices and $m$ edges, i.e., $|\mathcal{V}_1| = |\mathcal{V}_2| = n$ and $|\mathcal{E}_1| = |\mathcal{E}_2| = m$. Otherwise, 
the graphs can be trivially decided to be non-isomorphic as a bijection between the vertex sets would not be possible. Let $v_{ij}$ represent the $j^{th}$ vertex in the $i^{th}$ graph (i.e., $v_{ij} \in \mathcal{V}_i$) and 
$e_{kl}^i$ denote the edge $(v_{ik}, v_{il})$ in graph $\mathcal{G}_i$. Formally, the two graphs $\mathcal{G}_1$ and $\mathcal{G}_2$ are considered to be isomorphic if there exists a one-to-one mapping (i.e., bijection) 
$f: \mathcal{V}_1 \rightarrow \mathcal{V}_2$ between their vertices, such that $e_{uv}^1 (\in \mathcal{E}_1) \leftrightarrow e_{f(u)f(v)}^2 (\in \mathcal{E}_2)$.

\subsection{\centering Algorithmic Stages of \name}
\label{ssec:step}

We now describe the working steps of the proposed \name graph isomorphism testing heuristic, with Fig.~\ref{fig:algo} depicting a running example on graph $\mathcal{G}$ having $6$ vertices and $7$ edges. 

\subsection*{\indent \bf Step 1: \em Vertex Pair Distance Computation}
For each of the input graphs $\mathcal{G}_1$ and $\mathcal{G}_2$, we perform {\em breadth-first search} (BFS) to compute the hop-distance between all vertex pairs (as shown in Fig.~\ref{fig:algo}(a)). For each vertex $v$ 
in the graph, \name uses BFS (considering $v$ as the start vertex) to compute the hop distance of the other vertices from it. This pairwise vertex distances are used in the final stages of our approach to compute 
{\em vertex signatures} as discussed later in this section.

\subsection*{\indent \bf Step 2: \em Multi-Source Vertex Reachability}
Enumerating all possible paths between vertices of the graphs constitutes an exponential approach for isomorphism testing. Thus, pruning based path-finding approaches using tree constructs~\cite{path} have been explored 
in the literature. In the same spirit, \name incorporates path based ``reachability distance'' between vertices of the graph by employing a modified version of the BFS algorithm (as described next). 

\begin{figure}[t]
\centering
	\includegraphics[width=\columnwidth]{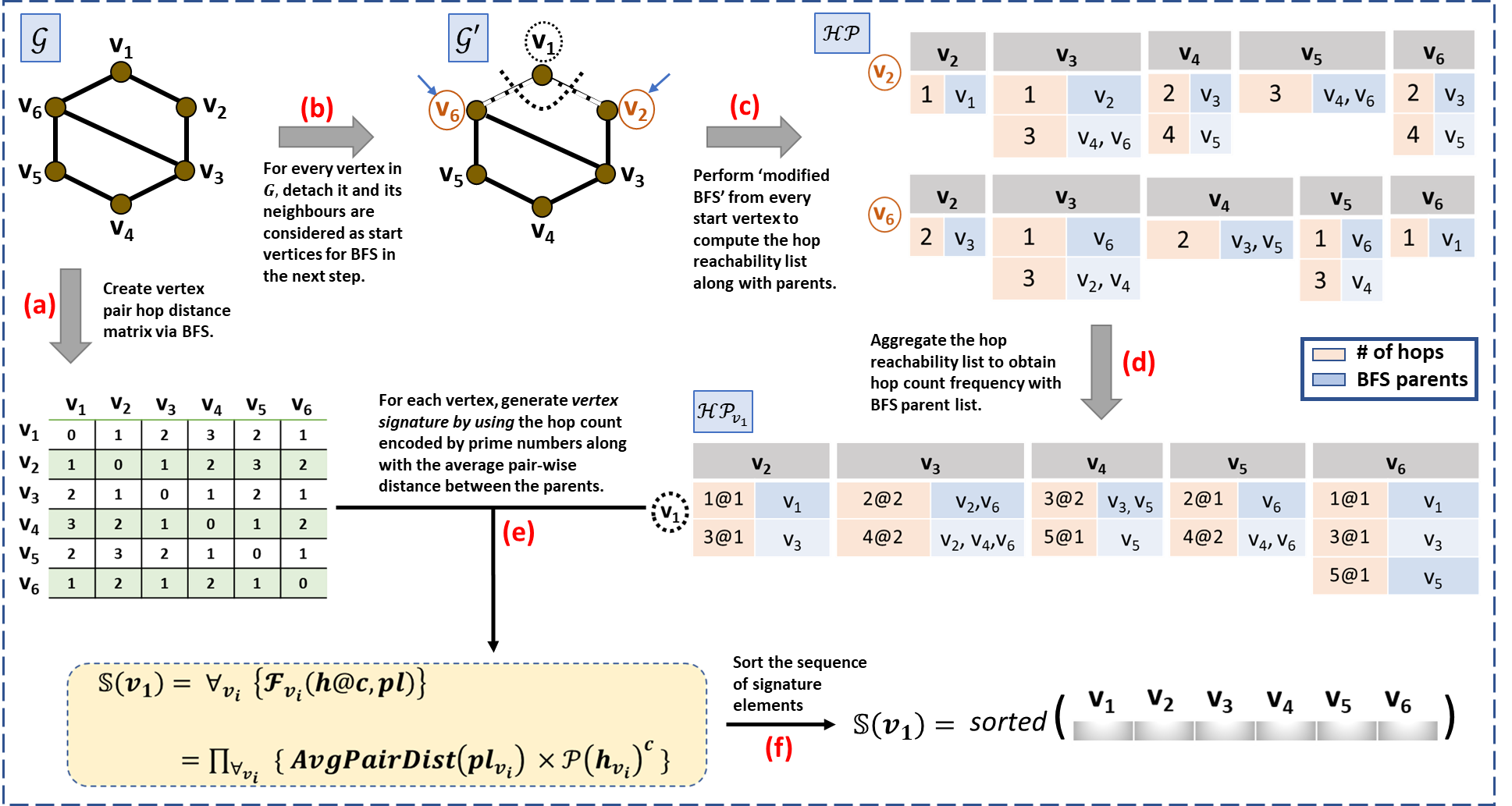}
	\caption{Running example to showcase the working steps of \name heuristic -- (a) Vertex Pair Distance Computation, (b) \& (c) Multi-Source Vertex Reachability, (d) Aggregate Vertex Hop-Reachability Distance, 
	and (e) Vertex Signature Computation.}
	\label{fig:algo}
\end{figure}

{\em Modified BFS Traversal.} For a vertex $v_1$ in graph $\mathcal{G}$, its immediate neighbours $\mathcal{N}(v_1)$ are identified and vertex $v_1$ (along with its edges) is subsequently removed from the graph, to obtain 
$\mathcal{G}'$ as shown in Fig.~\ref{fig:algo}(b). \name then performs multiple modified BFS traversal on $\mathcal{G}'$ considering each of the neighbour vertices $nv \in \mathcal{N}(v_1)$ as an independent start vertex. 
For the original vertex $v_1$, this computes several possible reachability hop-distances to the other vertices (potentially using diverse paths) based on the different start vertices. The {\em hop-distance} to a vertex along 
with its {\em immediate parent(s)} (via which it is reached during the traversal) is stored in the {\em hop-parent} index structure ($\mathcal{HP}$) as shown in Fig.~\ref{fig:algo}(c). Note, this multi-source 
(i.e., multiple start vertex) traversal enables our approach to capture different paths between vertex pairs in the graph, thereby providing important structural information to the heuristic.

Furthermore, during the ``modified'' BFS traversal, although each vertex of the graph is {\em discovered} and {\em visited}
\footnote{We consider a vertex to be ``discovered'' when it is pushed into the BFS heap from a ``visited'' neighbour. A vertex is said to be ``visited'' only when all its incident edges have been considered and the vertex is 
subsequently removed from the heap.\label{foot}} 
only once (i.e., inserted only once in the associated heap), edges to visited vertices from a discovered vertex are {\bf not} ignored. That is, if one of the end vertices of an edge is not visited, then the corresponding 
edge is considered as a possible new path from the unvisited vertex. For example, in graph $\mathcal{G}'$ (Fig.~\ref{fig:algo}(b)), visiting of vertex $v_3$ leads to the discovery (and heap insertion) of vertex 
$v_4$\footref{foot}. Subsequently during the ``visiting process'' of $v_4$, the edge $(v_4, v_3)$ is considered to capture a potential new path to $v_3$ (since $v_4$ is not yet visited although $v_3$ has 
been visited) -- creating a corresponding entry in index $\mathcal{HP}$. Observe, this might lead to an edge being traversed multiple times, albeit a constant number of times. 

We refer to this process of ``modified'' BFS traversal from multiple start vertices incorporating different paths between vertices of $\mathcal{G}$ as {\em multi-source vertex reachability} computation -- enabling enhanced 
capture of structural information in \name (compared to colour refinement procedure). For example, in Fig.~\ref{fig:algo}(b), we are thus able to capture both the $2$-hop and $4$-hop paths between $v_2$ and $v_4$ (i.e., 
$v_2 \rightarrow v_3 \rightarrow v_4$ and $v_2 \rightarrow v_3 \rightarrow v_6 \rightarrow v_5 \rightarrow v_4$), even though $v_6$ might have been visited earlier. This information is duly represented for $v_4$ in $\mathcal{HP}$ as 
$\langle 2, v_3 \rangle$ and $\langle 4,v_5 \rangle$ corresponding to the BFS start vertex $v_2$ (Fig.~\ref{fig:algo}(c)). 

The above process of multi-source vertex reachability computation is performed and aggregated for every vertex in the input graphs $\mathcal{G}_1$ and $\mathcal{G}_2$, as discussed next.

\subsection*{\indent \bf Step 3: \em Aggregate Vertex Hop-Reachability Distance}
Fig.~\ref{fig:algo}(c) depicts the {\em hop-parent} ($\mathcal{HP}$) indices for vertex $v_1$ in graph $\mathcal{G}$, obtained by the ``modified BFS'' procedure from its neighbouring vertices $\mathcal{N}(v_1) = \{v_2, v_6\}$.
Subsequently, these $\mathcal{HP}$ indices of the above multi-source vertex reachability step are aggregated for the original vertex $v_1$, to obtain $\mathcal{HP}_{v_1}$. To this effect, $\mathcal{HP}_{v_1}$ 
stores the $\langle${\em hop-distance:\{parent list\}}$\rangle$ from $v_1$ to every other vertex in $\mathcal{G}$. 

Let $\mathcal{HP}_{v_1 \rightarrow v_t} = \{\langle h@c_{v_1 \rightarrow v_t}, pl_{v_1 \rightarrow v_t} \rangle\}$ denote the collection of $\langle${\em hop-distance:\{parent list\}}$\rangle$ for vertex $v_t \in \mathcal{G}$ 
reachable from $v_1$. Here, $h$ represents the {\em hop-distances} of $v_t$ (i.e., the hops required to reach it) from $v_1$. Additionally, $c$ denotes the number of unique immediate parents via which $v_t$ was reachable from $v_1$ 
with $h$ hops, while $pl_{v_1 \rightarrow v_t}$ enumerates these parent vertices of $v_t$ (as shown in Fig.~\ref{fig:algo}(d)). Formally, for graph $\mathcal{G}$ we have $\mathcal{HP}_{v_i \in \mathcal{G}} = 
\forall_{v_j \in \mathcal{G}} \{ \langle h@c_{v_i \rightarrow v_j}, pl_{v_i \rightarrow v_j} \rangle \}$. To tackle corner cases, disconnected and unreachable vertices would have a hop-count of $0$ and $\emptyset$ parent list.

The $\mathcal{HP}_{v_i}$ index is similarly constructed for all the vertices of the input graphs $\mathcal{G}_1$ and $\mathcal{G}_2$, and is next used to compute the {\em signatures} of the vertices.

\subsection*{\indent \bf Step 4: \em Vertex Signature Generation}
Given the above aggregated $\mathcal{HP}_{v_i}$ structure for vertex $v_i$ in graph $\mathcal{G}$, \name next constructs the vertex {\em signature}, $\mathbb{S}(v_i)$, using prime encoding of the hop distances 
and the average distance between the BFS parents of the reachable vertices. Formally, we compute the vertex signature as,
\begin{align}
	\mathbb{S}(v_i) = \mathcal{F}(\mathcal{HP}_{v_i}) = \forall_{v_j \in \mathcal{G}} ~\mathcal{F}(\{ \langle h@c_{v_i \rightarrow v_j}, pl_{v_i \rightarrow v_j} \rangle \})
\end{align}
where $\mathcal{F}:  \langle h@c, pl \rangle \rightarrow \mathbb{R}$. Thus, the {\em signature} of $v_i$ can be viewed as 
as a {\em sequence of $n$ real-numbered values} obtained from $\mathcal{F}$ for each of the $n$ vertices of the graph (see Fig.~\ref{fig:algo}(e)).

Consider $\mathbb{P}_{\geq 3}$ to be the set of prime numbers greater than $2$. We define a unique mapping $\mathcal{P}: \mathbb{N} \rightarrow \mathbb{P}$, which is used by \name to transform a hop count value ($h$) to a 
prime number. This {\em prime encoding of the vertex hop reachability} forms the fundamental strategy for isomorphism testing in \name.
Hence, the $t^{th}$ element in the signature of $v_1$ corresponding to a reachable vertex $v_t$ (for $v_1, v_t \in \mathcal{G}$) is,
\begin{align}
\label{eq:sign}
	\mathcal{S}(v_i \rightarrow v_t) &= \mathcal{F}(\mathcal{HP}_{v_i \rightarrow v_t}) = \mathcal{F}\Bigl( \bigl\{ \langle h@c_{v_i \rightarrow v_t}, pl_{v_i \rightarrow v_t} \rangle \bigr\} \Bigr) \nonumber \\
	&= \prod_{\text{for each $\langle h@c, pl \rangle$}} \bigl\{AvPD(pl) \times \mathcal{P}(h)^c \bigr\}
\end{align}
where $AvPD$ is the average pairwise distance between the vertices in the parent list $pl$.

Intuitively, the ``modified'' BFS along with the {\em average pairwise parent distance} provides vital structural information. For example, in strongly 
regular graphs (e.g., Shrikhande and $4 \times 4$ Rook's graphs in Fig.~\ref{fig:examples}(b)) most vertices have the same degree and are reachable via the same number of hops from other vertices. On the other hand, 
the Rook's graph possesses 4-cliques (not present in the Rook's graph), while only the Shrikhande graph features 5-rings~\cite{cw}. We hypothesize that such differences in sub-structural properties are captured 
by \name enabling it to distinguish between these non-isomorphic graphs.

For our example in Fig.~\ref{fig:algo}(e), the $3^{rd}$ element of the signature of $v_1$ corresponds to vertex $v_3$ and is computed by Eq.~\eqref{eq:sign} based on the $\langle h@c, pl \rangle$ entries in 
$\mathcal{HP}_{v_1 \rightarrow v_3}$. Thus, we have, 
\begin{align*}
	\mathcal{S}(v_1 \rightarrow v_3) &= \mathcal{F}(\mathcal{HP}_{v_1 \rightarrow v_3}) = \mathcal{F}\Bigl( \bigl\{ \langle 2@2, \{v_2, v_6\} \rangle, \langle 4@2, \{v_2,v_4,v_6\} \rangle \bigr\} \Bigr) \\
	&= \mathcal{F}\bigl(\langle 2@2, \{v_2, v_6\} \rangle \bigr) \times \mathcal{F}\bigl(\langle 4@2, \{v_2,v_4,v_6\} \rangle \bigr) \\
	&= \Bigl[ \mathcal{P}(2)^2 \times AvPD\bigl(\{v_2, v_6\}\bigr) \Bigr] \times \Bigl[ \mathcal{P}(4)^2 \times AvPD\bigl(\{v_2, v_4, v_6\}\bigr) \Bigr]
\end{align*}

Thus for vertex $v_i$, the collection of signature elements (as in Eq.~\eqref{eq:sign}) pertaining to the other vertices of the graph forms the sequence $\Bigl\{ \forall_{v_j \in \mathcal{G}}~\mathcal{S}(v_i \rightarrow v_j)\Bigr\}$. 
This sequence is then sorted to obtain the {\em signature} of vertex $v_i$, i.e., $\mathbb{S}(v_i) = sorted \Bigl( \bigl\{ \forall_{v_j \in \mathcal{G}}~\mathcal{S}(v_i \rightarrow v_j)\bigr\} \Bigr)$, as shown in 
Fig.~\ref{fig:algo}(f).

The vertex signatures for all the vertices of the input graphs are similarly computed by \name.

\subsection*{\indent \bf Step 5: \em Graph Representation and Isomorphism Testing}
Finally, for an input graph $\mathcal{G}$, \name constructs its representation or ``certificate'' as a collection of signatures of its vertices. Formally, $\mathscr{C}(\mathcal{G)} = \bigl\{\mathbb{S}(v_i)\bigr\}$, for all 
$v_i \in \mathcal{G}$. For isomorphism testing between two input graphs $\mathcal{G}_1$ and $\mathcal{G}_2$, the generated ``certificates'' of the graphs are compared, as described next.

Let $\mathscr{C}(\mathcal{G}_1) = \bigl\{\forall_{i=1}^n ~\mathbb{S}(v_{1i})\bigr\}$ and $\mathscr{C}(\mathcal{G}_2) = \bigl\{\forall_{j=1}^n ~\mathbb{S}(v_{2j})\bigr\}$ be the certificates for graphs $\mathcal{G}_1$ 
and $\mathcal{G}_2$, respectively, with $n$ vertices. If $\exists v_{1p} \in \mathcal{G}_1, ~v_{2q} \in \mathcal{G}_2 ~\mid ~\mathbb{S}(v_{1p}) = \mathbb{S}(v_{2q})$, for $\mathbb{S}(v_{1p}) \in \mathscr{C}(\mathcal{G}_1), 
~\mathbb{S}(v_{2q}) \in \mathscr{C}(\mathcal{G}_2)$, \name considers the vertices to be a \emph{bijection} (i.e., 1-to-1 mapping) and the vertices are, thus, removed from $\mathscr{C}(\mathcal{G}_1)$ and $\mathscr{C}(\mathcal{G}_2)$ respectively. 
If there is a mismatch and no such matching candidate vertex signature is found between the graph certificates, we report $\mathcal{G}_1$ and $\mathcal{G}_2$ to be {\em non-isomorphic}. On the other hand, if a bijection 
(as described above) is found between all the vertex signatures of the input graphs, we consider the graphs to be {\em isomorphic} (and the bijection between the vertices are returned).

\subsection{\centering Correctness Analysis of \name}
\label{ssec:cor}

\begin{theorem}
\label{th:prime}
	For two {\em isomorphic} graphs $\mathcal{G}_1$ and $\mathcal{G}_2$, if there exists a {\em bijection} between vertices $v_{1i} \in \mathcal{G}_1$ and $v_{2j} \in \mathcal{G}_2$, then the {\em signatures} of $v_{1i}$ and 
	$v_{2j}$ computed by \name are {\em identical}, i.e., $\mathbb{S}(v_{1i}) = \mathbb{S}(v_{2j})$.
\end{theorem}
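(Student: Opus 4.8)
The plan is to exploit the fact that \name is \emph{equivariant} under graph isomorphism: every intermediate quantity it computes at a vertex $v\in\mathcal{V}_1$ equals the corresponding quantity computed at its image in $\mathcal{G}_2$, once vertex labels are translated through the isomorphism. Concretely, since $\mathcal{G}_1$ and $\mathcal{G}_2$ are isomorphic I would fix a graph isomorphism $f:\mathcal{V}_1\to\mathcal{V}_2$ realising the assumed correspondence, so that $v_{2j}=f(v_{1i})$, and show $\mathbb{S}(v)=\mathbb{S}(f(v))$ for every $v$. Because the per-vertex signature is a \emph{sorted} sequence taken over all target vertices, this last (label-dependent) object becomes canonical, and specialising $v=v_{1i}$ gives $\mathbb{S}(v_{1i})=\mathbb{S}(v_{2j})$.

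First I would record the two elementary invariances that drive everything. (i) An isomorphism preserves hop (shortest-path) distance: $d_{\mathcal{G}_1}(u,w)=d_{\mathcal{G}_2}(f(u),f(w))$ for all $u,w$, by the usual induction on the distance using adjacency preservation; this covers Step~1 and later the $AvPD$ terms. (ii) Vertex deletion commutes with $f$: the restriction of $f$ is an isomorphism from $\mathcal{G}_1'=\mathcal{G}_1\setminus\{v\}$ onto $\mathcal{G}_2'=\mathcal{G}_2\setminus\{f(v)\}$, and $f$ maps $\mathcal{N}_{\mathcal{G}_1}(v)$ bijectively onto $\mathcal{N}_{\mathcal{G}_2}(f(v))$. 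Hence the family of start vertices used by the multi-source modified BFS for $v$ is exactly the $f$-image of the family used for $f(v)$.

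Next I would push $f$ through the modified BFS of Step~2 and the aggregation of Step~3. The key claim is that, for any start vertex $s$ in $\mathcal{G}_1'$, the collection of $\langle\text{hop},\text{immediate parent}\rangle$ records produced from $s$ equals, after relabelling parents by $f$, the collection produced from $f(s)$ in $\mathcal{G}_2'$; since the aggregation over start vertices is a set/multiset union, which commutes with relabelling, this yields $\mathcal{HP}_{v\to w}$ in $\mathcal{G}_1$ equal to $\mathcal{HP}_{f(v)\to f(w)}$ in $\mathcal{G}_2$ with each parent list $pl$ replaced by $f(pl)$ and identical hop value $h$ and count $c=|pl|$ (including the corner-case convention of hop $0$ and empty parent list for unreachable targets, itself isomorphism-invariant). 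This is the step I expect to be the main obstacle, because the modified BFS visits vertices in \emph{some} order and "considers" certain edges to already-visited vertices, so one must argue the recorded structure does not depend on the tie-breaking order — it does not: a pair $\langle h,p\rangle$ is recorded for $w$ exactly when $p\in\mathcal{N}(w)$ and the (order-independent) hop value of $p$ from $s$ equals $h-1$ — or, failing a fully clean order-independence argument, one fixes the exploration order on $\mathcal{G}_2$ to be the $f$-image of the order chosen on $\mathcal{G}_1$, which is legitimate since the statement only asks about the signatures \name actually outputs.

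Finally I would assemble the signature. For each target $w\in\mathcal{V}_1$, $\mathcal{S}(v\to w)=\prod_{\langle h@c,pl\rangle}\bigl\{AvPD(pl)\times\mathcal{P}(h)^{c}\bigr\}$ depends only on the $h$'s, the $c$'s, and the values $AvPD(pl)$; invariance~(i) gives $AvPD(pl)=AvPD(f(pl))$, the map $\mathcal{P}$ is a fixed global function, and Step~3's claim makes the $\langle h@c,pl\rangle$ data match, so $\mathcal{S}(v\to w)=\mathcal{S}(f(v)\to f(w))$. Since $f$ is a bijection $\mathcal{V}_1\to\mathcal{V}_2$, reindexing $w'=f(w)$ shows the multiset $\{\mathcal{S}(v\to w):w\in\mathcal{V}_1\}$ equals the multiset $\{\mathcal{S}(f(v)\to w'):w'\in\mathcal{V}_2\}$; sorting a multiset is canonical, so $\mathbb{S}(v)=sorted(\{\mathcal{S}(v\to w):w\in\mathcal{V}_1\})$ coincides for $v$ and $f(v)$. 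Taking $v=v_{1i}$ yields $\mathbb{S}(v_{1i})=\mathbb{S}(v_{2j})$, as required.
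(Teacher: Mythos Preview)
Your proof is correct and follows essentially the same equivariance-under-isomorphism argument as the paper, though you execute it more carefully by explicitly tracking how each stage (vertex deletion, neighbour sets, modified BFS, aggregation, $AvPD$, and the final sorting) commutes with the fixed isomorphism $f$. Notably, you rightly omit the paper's digression on the uniqueness of prime factorisation---which is irrelevant to this direction of the claim and would only matter for a converse/injectivity statement---and you also flag and resolve the BFS tie-breaking subtlety that the paper's proof glosses over.
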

\begin{proof}
	Since the vertices $v_{1i} \in \mathcal{G}_1$ and $v_{2j} \in \mathcal{G}_2$ form a bijection between the isomorphic graphs $\mathcal{G}_1$ and $\mathcal{G}_2$, they are structurally identical. That is, they have the 
	same structural properties in terms of degree, neighbours, paths to other vertices and so on. Consider the two vertex features that \name uses for computing the vertex signatures, namely (i) {\em prime encoding of 
	hop-distance} to other vertices and (ii) {\em average pairwise parent distance} for the vertices.
	
	Note, the function $\mathcal{P}$ uniquely maps each hop count value to a prime, and the vertex signature involves the product of these prime encodings for the reachable vertices. Since prime numbers do not have any 
	{\em proper factors} (excluding $1$ and itself), the product of the primes are unique for a given combination of the hop counts. Thus, for a vertex $v_{1t} \in \mathcal{G}_1$, the corresponding $t^{th}$ element in 	$\mathbb{S}(v_{1i})$, 
	the signature of $v_{1i}$, contains the \emph{unique} product of the prime encodings of the hop counts via which $v_{1t}$ can be reached from $v_{1i}$. 

	Since, the input graphs are isomorphic, we consider $v_{1t}$ to have a bijection to vertex $v_{2s} \in \mathcal{G}_2$. Now, for vertex $v_{2j}$, the $s^{th}$ element in $\mathbb{S}(v_{2j})$ similarly contains the 
	product of prime encodings of the hop counts for reaching $v_{2s}$ from $v_{2j}$. Since, $v_{1i}$ and $v_{2j}$ are structurally identical, the hop counts to $v_{1t}$ and $v_{2s}$ respectively are the same and, thus, have the 
	same product of hop count prime encodings. 

	Further, the parent list for $v_{1t}$ and $v_{2s}$ also consists of equivalent vertices (as the graphs are isomorphic) and, hence, their average pairwise distance would also be the same. Combining this with the unique product of 
	hop primes (via function $\mathcal{F}$), the signature (after the sorting step) of vertices $v_{1i}$ and $v_{2j}$ would be identical (i.e., $\mathbb{S}(v_{1i}) = \mathbb{S}(v_{2j})$) if they share a bijection in the isomorphic 
	input graphs. 
\end{proof}

\begin{corollary}
\label{cor:iso}
	If two graphs $\mathcal{G}_1$ and $\mathcal{G}_2$ are {\em isomorphic}, their ``certificates'' or representations computed by \name are {\em equivalent}, i.e., $\mathscr{C}(\mathcal{G}_1) \equiv \mathscr{C}(\mathcal{G}_2)$.
\end{corollary}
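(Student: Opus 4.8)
The plan is to lift Theorem~\ref{th:prime}, which concerns a single matched pair of vertices, to a statement about the whole certificates by iterating it over the isomorphism bijection. First I would fix an isomorphism $f : \mathcal{V}_1 \to \mathcal{V}_2$ (which exists by hypothesis). For every vertex $v_{1i} \in \mathcal{G}_1$, the pair $(v_{1i}, f(v_{1i}))$ is precisely the kind of bijection between two isomorphic graphs that Theorem~\ref{th:prime} applies to, so $\mathbb{S}(v_{1i}) = \mathbb{S}(f(v_{1i}))$. Hence $f$ induces a bijection between the two vertex sets that preserves \name signatures value-for-value.

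Next I would read off what this says about the certificates. Since $\mathscr{C}(\mathcal{G}_1) = \{\mathbb{S}(v_{1i})\}_{i=1}^{n}$ and $\mathscr{C}(\mathcal{G}_2) = \{\mathbb{S}(v_{2j})\}_{j=1}^{n}$, the map $v_{1i} \mapsto f(v_{1i})$ witnesses that these two collections, viewed as multisets of real-valued sequences, coincide: each signature value occurs with the same multiplicity on both sides. This is exactly the content of $\mathscr{C}(\mathcal{G}_1) \equiv \mathscr{C}(\mathcal{G}_2)$.

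Finally I would verify that this multiset equality suffices for the comparison routine of Step~5 to declare the graphs isomorphic, i.e., that the greedy ``find an equal pair of signatures, remove it from both certificates, repeat'' loop never halts with a nonempty residue. Grouping signatures by value, within each value class the two sides hold the same number of copies, so every greedy removal stays inside a single class and decrements both of its counts; all classes therefore empty simultaneously and the loop exhausts both certificates. Consequently no mismatch is reported and \name returns ``isomorphic'', consistent with its claimed one-sided-error behaviour.

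The main obstacle is not the vertex-wise argument — that is immediate from Theorem~\ref{th:prime} — but pinning down the semantics of ``$\equiv$'' and confirming that the greedy matching in Step~5 cannot be derailed by signature collisions (two structurally different vertices that happen to share a signature). The resolution is the observation above: the greedy procedure only ever pairs vertices within a common signature-value class, so equal multiplicities per class make the order of greedy choices immaterial. A secondary point worth stating explicitly is that we establish only the ``isomorphic $\Rightarrow$ equal certificates'' direction; the converse fails in general (\name is incomplete, like WL), so nothing more is required here.
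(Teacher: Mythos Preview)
Your proposal is correct and follows essentially the same approach as the paper: invoke Theorem~\ref{th:prime} across the isomorphism bijection to conclude that the two certificates contain the same multiset of vertex signatures. Your treatment is in fact more careful than the paper's, since you explicitly address why signature collisions cannot derail the greedy matching of Step~5, a point the paper leaves implicit.
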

\begin{proof}
	The certificate of an input graph is constructed from the signatures of the vertices of the graph. For isomorphic graphs, from Theorem~\ref{th:prime}, for every $v_{1i} \in \mathcal{G}_1$ there exists $v_{2j} \in \mathcal{G}_2$ 
	having an identical signature. These are reported as the bijection between the vertices for the isomorphic input graphs. Thus, there exists an ordering of the representations of such input graphs that are identical. To this end, 
	isomorphic graph certificates constructed by \name are equivalent. 
\end{proof}

\noindent {\bf Observation.} The proposed \name graph isomorphism testing is ``incomplete'' in the same sense as that of the Weisfeiler-Lehman procedure. \name thus suffers from one-sided error, where isomorphic graphs will 
never be deemed as non-isomorphic (as established in Theorem~\ref{th:prime}), although the reverse might occur. In other words, if two graphs have different representations they are definitely non-isomorphic, but graphs with 
identical representations may not be isomorphic. Hence, non-isomorphic graphs might possibly have identical graph certificates in \name, and can be classified as isomorphic. 

Observe, although the 
``modified'' BFS attempts to capture structural information of the graphs, it would fail to detect long paths between vertices, which might involve already visited intermediate vertices. Due to this partial nature of the 
structural information in \name, it might fail to distinguish non-isomorphic graphs. However, in practice, we observe that \name is quite effective in detecting non-isomorphism even for a wide range of challenging 
graph classes (refer Sec.~\ref{sec:expt}).

\subsection{\centering Runtime Complexity of \name}
\label{ssec:run}

\begin{table}[t]
\centering
\scriptsize
\caption{Comparative study of performance of \name and Weisfeiler-Lehman (WL) on {\em \bf non-isomorphic} instances of hard graph classes. We observe that \name achieves better performance than WL in detecting non-isomorphism 
for majority of the challenging scenarios. For a few graph classes \name also exhibits false positive results.}
\label{tab:non_iso}
	\begin{tabular}{c||c|c}
		\toprule
		{\bf Graph Class / Algorithm} & {\bf Weisfeiler-Lehman Test} & {\bf \name Heuristic} \\
		\midrule
		\midrule
		{\em Disconnected property (Fig.~\ref{fig:examples}(a))} & \circledcross & \circledcheck \\
		{\em Strongly regular property (Fig.~\ref{fig:examples}(b))} & \circledcross & \circledcheck \\
		\midrule
		{\em Switched Edge Miyazaki I and II Graphs (Fig.~\ref{fig:examples}(c))} & \circledcross & \circledcheck \\
		{\em Random Regular Graphs (RND-3-REG)} & \circledcross & \circledcheck \\
		{\em Union of Strongly Regular Graphs (USR)} & \circledcross & \circledcheck \\
		{\em Non-Disjoint Union of Tripartite Graphs (TNN)} & \circledcross & \circledcheck \\
		{\em Cubic Hypohamiltonian Graphs (CHH)} & \circledcross & \circledcheck \\
		{\em Steiner Triple System Graphs with switched edges (STS-SW)} & \circledcross & \circledcheck \\
		{\em Product Graphs (F-LEX)} & \circledcross & \circledcheck \\
		{\em Latin-SW Series Graphs} & \circledcross & \circledcheck \\
		{\em Paulus Graphs} & \circledcross & \circledcheck \\
		\midrule
		{\em Dawar-Yeung construction on Cai, F\"urer and Immerman Graphs (SAT-CFI)} & \circledcheck & \circledcheck \\
		\midrule
		{\em CFI I and II} & \circledcross & \circledcross \\
		{\em Projective Plane Graphs (PP)} & \circledcross & \circledcross \\
		{\em Hadamard Matrix Graphs with switched edges (HAD-SW)} & \circledcross & \circledcross \\
		{\em Multipedes (on cyclic group $\mathbb{Z}_2$) (z2)} & \circledcross & \circledcross \\
		{\em Dihedral Construction $R(B(G_n,\sigma)$) (on dihedral group $D_3$) (d3)} & \circledcross & \circledcross \\
		{\em Rigid Base Construction $R^*(B(G_n,\sigma)$) (on cyclic group $\mathbb{Z}_2$) (s2)} & \circledcross & \circledcross \\
		{\em Rigid Base Construction $R(B(G_n,\sigma)$) (on cyclic group $\mathbb{Z}_3$) (z3)} & \circledcross & \circledcross \\
		\bottomrule
	\end{tabular}
\end{table}

\begin{theorem}
\label{th:run}
	The runtime complexity of \name is {\em polynomial} and is bounded by $\mathcal{O}(m^2 + mn^2 + n^3)$, where $n$ and $m$ are the number of vertices and number of edges, respectively, in the input graphs.
\end{theorem}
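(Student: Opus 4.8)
\emph{Proof proposal.} My plan is to bound the cost of each of the five algorithmic stages of \name separately and add them, since the stages run sequentially. Throughout I will adopt the unit-cost (word-RAM) model for arithmetic on the encoded values; this is harmless because every prime $\mathcal{P}(h)$ used has index $h \le n$ and every exponent $c$ is at most $n-1$, so each signature entry has $\mathcal{O}(n\log n)$ bits and exact big-integer arithmetic would only contribute a $\mathrm{polylog}(n)$ factor. Step~1 performs one ordinary BFS from each of the $n$ vertices, costing $\mathcal{O}(n(n+m))$, and stores the all-pairs hop matrix, after which every pairwise-distance query is $\mathcal{O}(1)$. For Steps~2 and~3, I fix a vertex $v$, delete it and its incident edges in $\mathcal{O}(n+m)$ time, and run one ``modified'' BFS from each of the $\deg(v)$ neighbours in $\mathcal{N}(v)$. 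The crucial structural fact, asserted in the algorithm description, is that each edge is traversed only a \emph{constant} number of times during a single modified traversal, so one such traversal --- together with the $\langle h@c,pl\rangle$ records it inserts into $\mathcal{HP}$ --- takes $\mathcal{O}(n+m)$. Hence the work charged to $v$ is $\mathcal{O}(\deg(v)(n+m))$, and summing over all $v$ with $\sum_v \deg(v) = 2m$ gives $\mathcal{O}(m(n+m)) = \mathcal{O}(mn+m^2)$ for the whole multi-source reachability phase; this same quantity also bounds the total number of records produced, and the key-based merging of Step~3 can be done within it by bucketing/sorting.

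\emph{Steps~4 and~5.} Generating the signature of a vertex $v_i$ amounts to evaluating $\mathcal{S}(v_i \to v_j)$ of Eq.~\eqref{eq:sign} for each of the $n$ targets $v_j$: every aggregated record $\langle h@c,pl\rangle$ contributes one prime power (unit cost) and one $AvPD(pl)$, and $AvPD(pl)$ is the mean of $\binom{|pl|}{2}$ already-precomputed distances, hence $\mathcal{O}(|pl|^2)$. Summing these $AvPD$ costs over all records of all vertices, and invoking the amortized bound on the combined size of the $\mathcal{HP}$ indices and of all parent lists established above, collapses to $\mathcal{O}(mn^2)$ --- this is the stage that yields the $mn^2$ term. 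Sorting the $n$ produced values into $\mathbb{S}(v_i)$ costs $\mathcal{O}(n\log n)$, i.e.\ $\mathcal{O}(n^2\log n)$ over all vertices, which is dominated by $n^3$. Finally, Step~5 builds the two certificates (each a collection of $n$ signatures, each signature a length-$n$ sequence) and greedily pairs them: for each of the $n$ signatures of $\mathcal{G}_1$ we scan up to $n$ remaining signatures of $\mathcal{G}_2$, comparing two length-$n$ sequences in $\mathcal{O}(n)$, for a total of $\mathcal{O}(n^3)$. Adding the five contributions, $\mathcal{O}(n(n+m)) + \mathcal{O}(mn+m^2) + \mathcal{O}(mn^2) + \mathcal{O}(n^2\log n) + \mathcal{O}(n^3)$, the dominant terms are exactly $\mathcal{O}(m^2 + mn^2 + n^3)$, with $mn$ and $n^2$ absorbed, and polynomiality is immediate.

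\emph{Main obstacle.} The routine part is the per-step BFS and comparison counting; the genuinely delicate part is the amortized accounting that underlies both the $m^2$ term (Steps~2--3) and the $mn^2$ term (Step~4). One must make precise (i) why a single modified BFS touches each edge only $\mathcal{O}(1)$ times --- otherwise the $\deg(v)$ traversals launched from each deleted vertex would cost more than $\mathcal{O}(\deg(v)(n+m))$; and (ii) why, after the duplicate-merging of Step~3, the sum of $|pl|^2$ over all records across all vertices is $\mathcal{O}(mn^2)$ rather than the weaker bound a crude count suggests --- this requires bounding the number of distinct hop values recorded per (source, target) pair and bounding each parent list by the target's degree, then amortizing against the $\mathcal{O}(m^2+mn)$ total-work budget. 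I would also state the arithmetic cost model explicitly, so that the ``prime encoding'' products do not silently smuggle in a super-polynomial factor. If any of these amortized bounds turns out to be loose, the fallback is to report the (still polynomial) bound that the crude counting yields and note that the experiments confirm practicality.
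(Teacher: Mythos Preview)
Your proposal is correct and follows essentially the same step-by-step decomposition as the paper's proof: bound each of the five stages separately using $\sum_v \deg(v)=2m$ for the multi-source traversals, then sum. The only differences are bookkeeping---the paper charges the $mn^2$ term to the Step~3 merge and the $n^3$ term to Step~4 (claiming Step~5 is $\mathcal{O}(n^2)$), whereas you charge $mn^2$ to the $AvPD$ computations in Step~4 and $n^3$ to the certificate comparison in Step~5; your explicit remark on the arithmetic cost model for the prime products is an addition the paper omits.
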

\begin{proof}
Consider that both the input graphs for isomorphism testing contain $n$ vertices and $m$ edges. 

The {\em vertex pair distance computation} (Step 1 in Sec.~\ref{ssec:step}) performs a breadth-first search (BFS) from each of the vertices of the graph. Thus this process takes 
$n \times \mathcal{O}(m + n) = {\bf \mathcal{O}(mn + n^2)}$ runtime. 

In Step 2 we construct the multi-source vertex reachability index, $\mathcal{HP}$, for each of the vertices. Here, for vertex $v$ with degree $deg(v)$, \name performs $deg(v)$ ``modified'' BFS traversals, each starting 
from one of the neighbours of $v$. Note, in our ``modified`` BFS procedure, although an edge can be visited multiple times, each vertex is visited (and inserted in the heap) only {\em once}, as discussed in Sec.~\ref{sec:algo}. 
Observe that an edge ($u,v$) is actually traversed {\em twice} -- once during visiting $u$ (during which $v$ is discovered) and then again when $v$ is visited. After both $u$ and $v$ have been visited, they are not 
inserted into the heap any further and, hence, edge $(u,v)$ is not encountered again. Thus, the time complexity of the ``modified BFS'' remains $\mathcal{O}(m + n)$. Combining the above, for each vertex construction of the 
reachability index takes $deg(v) \times \mathcal{O}(m + n)$ time. Hence, the overall runtime complexity of creating $\mathcal{HP}$ for all vertices (Step 2 in \name) is bounded by $\sum_{\forall v_i} deg(v_i) 
\times \mathcal{O}(m + n) = \mathcal{O}(m) \times \mathcal{O}(m + n) = {\bf \mathcal{O}(m^2 + mn)}$.

The aggregation of hop-reachability distance for vertex $v$ (in Step 3) involves merging the above $deg(v)$ instances of $\mathcal{HP}$ structures pertaining to the neighbours of $v$ (considered as the start vertex for 
``modified'' BFS). Further, each $\mathcal{HP}$ contains $\bigl\{\langle$ hop-distance:\{parent list\} $\rangle\bigr\}$ entries for every $v_t \in \mathcal{G}$ (i.e., $\mathcal{HP}_{v \rightarrow v_t}$). Note, hop-distance 
is bounded by $\mathcal{O}(n)$, while the size of the parent list is at most $deg(v_t)$. Thus, for every vertex $v_t$, \name merges $deg(v)$ lists (from each of the $\mathcal{HP}$ associated with original vertex $v$) 
with size $\mathcal{O}(n)$, thereby leading to a time complexity of $\sum_{\forall_{v_t \in \mathcal{G}}} ~deg(v) \times \mathcal{O}(n) = deg(v) \times \mathcal{O}(n^2)$. Finally, in Step 3, \name merges $\mathcal{HP}$ indices 
pertaining to all vertices of the input graph, thus having a runtime complexity of $\sum_{\forall_{v \in \mathcal{G}}} deg(v) \times \mathcal{O}(n^2) = {\bf \mathcal{O}(mn^2)}$.

The {\em vertex signatures} are computed in Step 4 based on Eq.~\eqref{eq:sign}. For each vertex $v$, the prime encoding of the hop counts takes $\mathcal{O}(n)$ (as hop count is bounded by the number of vertices), 
while computing the average pairwise distance between the parents is at most $\mathcal{O}(n^2)$ (bounded by the access to all elements of the vertex pair distance matrix of Step 1). Thus, the sequence of $n$ real-valued 
signature vector of $v$ computed using $\mathcal{F}$ and its subsequent sorting is performed in $\mathcal{O}(n^2 + n \log n)$ time. So, generating the graph ``certificate'' with all the vertex signatures (Step 5) has a time complexity 
of $n \times \mathcal{O}(n^2 + n \log n) = {\bf \mathcal{O}(n^3)}$.

Finally, for isomorphism testing, \name compares the vertex signatures between the input graphs to find a bijection between the $2n$ vertices, which takes 
${\bf \mathcal{O}(n^2)}$ time.

Combining the above, we observe that the runtime complexity of \name for isomorphism testing is bounded by $\mathcal{O}(mn + n^2) + \mathcal{O}(m^2 + mn) + \mathcal{O}(mn^2) + \mathcal{O}(n^3) + \mathcal{O}(n^2) = 
{\bf \mathcal{O}(m^2 + mn^2 + n^3)}$, i.e., {\em polynomial} in terms of the number of edges and vertices of the input graphs\footnote{This translates to $\mathcal{O}(n^3)$ for sparse graphs ($\mathcal{O}(m) 
\approx \mathcal{O}(n)$), and to $\mathcal{O}(m^2 + mn^2)$ for dense graphs.}.
\end{proof}

As comparison, note that the $3$-WL algorithm having a runtime of $\mathcal{O}(n^4 \log n)$ is unable to distinguish between the non-isomorphic Shrikhande and $4\times4$ Rook's graphs, while \name correctly reports them. 
In fact, distance based WL (using biconnectivity) approach also fails in this respect~\cite{gdwl}. Although $4$-WL is able to distinguish the above, it suffers from a prohibitive runtime of $\mathcal{O}(n^5 \log n)$. Thus, 
\name provides an efficient and more effective alternative to the WL approach for graph isomorphism.

\section{Experimental Evaluation}
\label{sec:expt}

\begin{table}[t]
\centering
\caption{Comparative study of performance of \name and Weisfeiler-Lehman (WL) on {\em \bf isomorphic} instances of hard graph classes. Both \name and WL are able to detect isomorphism for 
all the graph classes. This demonstrates the property of one-sided error for both the heuristics.}
\label{tab:iso}
	\begin{tabular}{c||c|c}
		\toprule
		{\bf Graph Class / Algorithm} & {\bf Weisfeiler-Lehman Test} & {\bf \name Heuristic} \\
		\midrule
		\midrule
		{\em AG Series} & \circledcheck & \circledcheck \\
		{\em CMZ Series} & \circledcheck & \circledcheck \\
		{\em Grid Series} & \circledcheck & \circledcheck \\
		{\em KEF Series} & \circledcheck & \circledcheck \\
		{\em MZ Series} & \circledcheck & \circledcheck \\
		{\em Latin Series} & \circledcheck & \circledcheck \\
		{\em Lattice Series} & \circledcheck & \circledcheck \\
		{\em PG Series} & \circledcheck & \circledcheck \\
		{\em k-Complete Series} & \circledcheck & \circledcheck \\
		{\em Triang Series} & \circledcheck & \circledcheck \\
		{\em Paley Series} & \circledcheck & \circledcheck \\
		{\em Shrunken Multipedes (on cyclic group $\mathbb{Z}_2$) (t2)} & \circledcheck & \circledcheck \\
		{\em Rigid Base Construction $R(B^*(G_n,\sigma)$) (on cyclic group $\mathbb{Z}_2$) (r2)} & \circledcheck & \circledcheck \\
		\bottomrule
	\end{tabular}
\end{table}

In this section, we present empirical evidence as to the efficacy of the proposed \name isomorphism testing heuristic. To this end, we evaluate the discriminative power of \name for several challenging and hard families 
of graph instances, like {\em Miyazaki, Paulus, cubic hypohamiltonian, strongly regular, Latin series} and {\em Steiner triple system graphs} to name a few. We compare our approach to the classical and widely used 
$2$-Weisfeiler-Lehman ($2$-WL) approach, and report the results for both isomorphic and non-isomorphic graph instances.

We evaluated the approaches over a wide gamut of graph classes across several isomorphism testing benchmark datasets as obtained from the following {\em three} sources: \\
(i) \url{https://pallini.di.uniroma1.it/Graphs.html}; \\
(ii) \url{https://www.lics.rwth-aachen.de/go/id/rtok/}; and, \\
(iii) \url{http://www.tcs.hut.fi/Software/bliss/benchmarks/index.shtml} \\
The Miyazaki MZ I and MZ II graphs were obtained from \url{https://codeocean.com/capsule/8055748}~\cite{quan}.

Table~\ref{tab:non_iso} compares the performance of \name and WL approaches in identifying non-isomorphic graph classes. We observe that \name demonstrates far lesser number of false positives (i.e., non-isomorphic 
graphs classified as isomorphic) compared to the WL test. Specifically, for the $19$ hard non-isomorphic graph classes shown in Table~\ref{tab:non_iso}, \name was able to correctly distinguish $12$ (i.e., $\sim 63\%$) of them, 
while WL could detect only $1$ (i.e., $\sim 5\%$). This marked improvement in performance for ``hard'' instances would be beneficial not only for isomorphism detection but also for other applications and domains.

However, we do observe (in Table~\ref{tab:non_iso}) that \name mistakenly reports certain classes of non-isomorphic graphs as isomorphic. This includes CFI graphs, Projective Planar and Hadamard Matrix graphs among others. We 
believe that the layered structure of these graphs with many vertices having equal degree makes it difficult for our ``modified'' BFS traversal to suitably capture sufficient structural information. In fact, these graphs are 
considered among the possibly hardest instances for graph isomorphism detection.

Finally, for completeness, Table~\ref{tab:iso} also reports the performance of \name and WL approaches in identifying isomorphic graphs, wherein both the approaches are perfect (i.e., theoretically proven to be accurate) in 
this respect.

\section{Discussion and Conclusion}
\label{sec:conc}

This work presents \name, a polynomial time graph isomorphism testing heuristic. On various hard graph class instances (e.g., Miyazaki, strongly regular and cubic hypohamiltonian graphs), we depicted that our approach could 
accurately distinguish non-isomorphic graphs, whereas the well-known WL test failed, thus significantly expanding the scope of polynomial time isomorphism detection. For finding isomorphism, \name computes {\em vertex 
signatures} capturing structural information based on {\em prime encoding} of ``reachability distance'' to other vertices coupled with the average pair-wise distance among the parents of a vertex.

Although \name suffers from ``one-sided'' error and fails for certain extremely hard graphs (e.g., CFI and projective planar graphs), this work would benefit downstream applications of graph isomorphism as well as hopefully 
spur renewed interest in exploring alternatives to WL testing procedure. 

The exact nature or characteristics of the graphs class on which \name is guaranteed to work (or fail) is still unknown, and provides an interesting direction of future work. Possibly this could lead to the formulation 
of a new class of graphs, the isomorphism of which can be provably solved in polynomial time. In fact, since the current known theoretical bound for GI by Babai~\cite{bound2a} uses $k$-WL heuristic, it would be interesting 
to study if replacing it with our approach provides further improvements.

\bibliographystyle{IEEEtran}
\bibliography{ref}

\end{document}